\newcommand{\keywords}[1]{\par\addvspace\baselineskip
\noindent\keywordname\enspace\ignorespaces#1}
\newtheorem{fact}[proposition]{Proposition}
\newcommand{\eqref}[1]{(\ref{#1})}
\newcommand{\E}{\textbf{E}}
\newcommand{\broadcast}{{\it BROADCAST}}
\newcommand{\broadcastws}{{\it BROADCAST}\xspace}
\newcommand{\send}{{\it SEND}\xspace}
\newcommand{\sendpath}{{\it SEND-PATH}\xspace}
\newcommand{\che}{{\it CHECK}\xspace}
\newcommand{\simplecheck}{{\it CHECK1}\xspace}
\newcommand{\advancedcheck}{{\it CHECK2}\xspace}
\newcommand{\update}{{\it UPDATE}\xspace}
\newcommand{\pcheckcall}{p_{call}}
\newcommand{\pcheckcatch}{p_{detect}}
\newcommand{\chkprobinv}{(\log^{*} n)^{2}}
\newcommand{\numbadnodes}{n/8}
\newcommand{\chkrows}{4\log^{*} n}
\newcommand{\sender}{{\bf s}\xspace}
\newcommand{\receiver}{{\bf r}\xspace}
\newcommand{\llog}{\emph{self-healing}\xspace}
\newcommand{\bfly}{\emph{no-self-healing}\xspace}
\newcommand{\paperTitle}{Self-Healing of Byzantine Faults}
\newcommand{\paperTitleAbbr}{Self-Healing of Byzantine Faults}
\newcommand{\authorsAbbr}{J. Knockel, G. Saad and J. Saia}
\date{}
\begin{document}
\mainmatter  

\author{Jeffrey Knockel \and George Saad \and Jared Saia}

\institute{Department of Computer Science, University of New Mexico,\\
\email{\{jeffk,george.saad,saia\}@cs.unm.edu}
}


\title{\paperTitle\footnote{Eligible for Best Student Paper Award (George Saad is a full-time student).}}

\titlerunning{\paperTitleAbbr}

\authorrunning{\authorsAbbr}
\tocauthor{Jeffrey Knockel, George Saad, and  Jared Saia}

\toctitle{\paperTitleAbbr}
\maketitle

\begin{abstract}

Recent years have seen significant interest in designing networks that are \emph{self-healing} in the sense that they can automatically recover from adversarial attacks. Previous work shows that it is possible for a network to automatically recover, even when an adversary repeatedly deletes nodes in the network.  However, there have not yet been any algorithms that self-heal in the case where an adversary takes over nodes in the network. In this paper, we address this gap.

In particular, we describe a communication network over $n$ nodes that ensures the following properties, even when an adversary controls up to $t \leq (1/8-\epsilon)n $ nodes, for any non-negative $\epsilon$. First, the network provides a point-to-point communication with bandwidth and latency costs that are asymptotically optimal. Second, the expected total number of message corruptions is $O(t \chkprobinv )$ before the adversarially controlled nodes are effectively quarantined so that they cause no more corruptions.    
Empirical results show that our algorithm can reduce the bandwidth cost by up to a factor of $70$.
\keywords{Byzantine Faults, Threshold Cryptography, Self-Healing Algorithms.}
\end{abstract}

{\em ``Fool me once, shame on you. Fool me twice, shame on me.'' - English proverb}

\section{Introduction}

Self-healing algorithms protect critical properties of a network, even when that network is under repeated attack.  Such algorithms only expend resources when it is necessary to repair damage done by an attacker.  Thus, they provide significant resource savings when compared to traditional robust algorithms, which expend significant resources even when the network is not under attack.


The last several years have seen exciting results in the design of self-healing algorithms~\cite{boman2006brief,saia2008picking,hayes2008forgiving,hayes2009forgiving,pandurangan2011xheal,sarma2011edge}.  Unfortunately, none of these previous results handle \emph{Byzantine faults}, where an adversary takes over nodes in the network and can cause them to deviate arbitrarily from the protocol.  This is a significant gap, since traditional Byzantine-resilient algorithms are notoriously inefficient, and the self-healing approach could significantly improve efficiency.

In this paper, we take a step towards addressing this gap.  For a network of $n$ nodes, we design self-healing algorithms for communication that tolerate up to $1/8$ fraction of Byzantine faults.  Our algorithms enable any node to send a message to any other node in the network with bandwidth and latency costs that are asymptotically optimal.  


Moreover, our algorithms limit the expected total number of message corruptions.  Ideally, each Byzantine node would cause $O(1)$ corruptions; our result is that each Byzantine node  causes an expected $O((\log^{*}n)^{2})$ corruptions.
\footnote{Recall that $\log^{*} n$ or the iterated logarithm function is the number of times logarithm must be applied iteratively before the result is less than or equal to $1$.  It is an extremely slowly growing function: e.g. $\log^{*} 10^{10} = 5$.}
Now we must amend our initial proverb to: \emph{``Fool me once, shame on you.  Fool me $\omega((\log^{*}n)^{2})$ times, shame on me.''}.
\subsection{Our Model}

We assume an adversary that is \emph{static} in the sense that it takes over nodes before the algorithm begins. The nodes that are compromised by the adversary are \emph{bad}, and the other nodes are \emph{good}. The bad nodes may arbitrarily deviate from the protocol, by sending no messages, excessive numbers of messages, incorrect messages,  or any combination of these.  The good nodes follow the protocol.  We assume that the adversary knows our protocol, but is unaware of the random bits of the good nodes.

We further assume that each node has a unique ID.  We say that node $p$ has a link to node $q$ if $p$ knows $q$'s ID and can thus directly communicate with node $q$.  Also, we assume the existence of a public key digital signature scheme, and thus a computationally bounded adversary.
Finally, we assume a partially synchronous communication model: any message sent from one good node to another good node requires at most $h$ time steps to be sent and received, and the value $h$ is known to all nodes.  Also, we allow for the adversary to be \emph{rushing}, where the bad nodes receive all messages from good nodes in a round before sending out their own messages.

Our algorithms make critical use of quorums and a quorum graph.  
We define a \emph{quorum} to be a set of $\theta(\log n)$ nodes, of which at most $1/8$-fraction are bad.  
Many results show how to create and maintain a network of quorums~\cite{FS2,hildrum2003,NW03,scheideler:how,FSY,AS4,king2011load}.  All of these results maintain what we will call a \emph{quorum graph} in which each vertex represents a quorum.    
The properties of the quorum graph are: 
1) each node is in $O(\log n)$ quorums; 
2) for any quorum $Q$, any node in $Q$ can communicate directly to any other node in $Q$; and 
3) for any quorums $Q_{i}$ and $Q_{j}$ that are connected in the quorum graph, any node in $Q_{i}$ can communicate directly with any node in $Q_{j}$ and vice versa.
Moreover, we assume that for any two nodes $x$ and $y$ in a quorum, node $x$ knows all quorums that node $y$ is in.

The communication in the quorum graph typically occurs as follows.  
When a node $\sender$ sends another node $\receiver$ some message $m$, there is a canonical \emph{quorum path}, $Q_{1}, Q_{2}, \ldots, Q_{\ell}$, through the quorum graph.
This path is determined by the ID's of both $\sender$ and $\receiver$.  A naive way to route the message is for $\sender$ to send $m$ to all nodes in $Q_{1}$.  
Then for $i = 1$ to $\ell -1$, for all nodes in $Q_{i}$ to send $m$ to all nodes in $Q_{i+1}$, and for each node in $Q_{i+1}$ to do majority filtering on the messages received in order to determine the true value of $m$. 
Then all nodes in $Q_\ell$ send $m$ to node $\receiver$ that does a majority filter on the received messages.
Unfortunately, this algorithm requires $O(\ell \log^{2} n)$ messages and a latency of $O(\ell)$.
This paper shows how to reduce the message cost to $O(\ell + \log n)$, in an amortized sense. 

As we show in Section \ref{s:empirical}, this reduction can be large in practice.  
In particular, we reduce the bandwidth cost by a factor of $58$ for $n = 14{,}116$, and by a factor of $70$ for $n = 30{,}509$.

\subsection{Our Results}

This paper provides a self-healing algorithm, \send, that sends a message from a source node to a target node in the network.  Our main result is summarized in the following theorem.

\begin{theorem}\label{thm:corruptions}
Assume we have a network with $n$ nodes and $t\le (1/8-\epsilon)n$ bad nodes, for any non-negative $\epsilon$, and a quorum graph as described above.  Then our algorithm ensures the following.
\begin{itemize}
\item For any call to \send, the expected number of messages is $O(\ell + \log n)$ and the expected latency is $O(\ell)$, in an amortized sense.\footnote{In particular, if we perform any number of message sends through quorum paths, where $\ell_{M}$ is the longest such path, and $\mathcal{L}$ is the sum of the quorums traversed in all such paths, then the expected total number of messages sent will be $O(\mathcal{L} + t \cdot (\ell_{M} \log^{2} n + \log^5 n))$.  Note that, since $t$ is fixed, for large $\mathcal{L}$, the expected total number of messages is $O(\mathcal{L})$.}
\item The expected total number of times that \send fails to deliver a message reliably is at most $3t \chkprobinv$.
\end{itemize}
\end{theorem}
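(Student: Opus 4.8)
My plan is to prove the theorem for \send, which I would build from three layers: \sendpath, which routes $m$ along a single designated representative of each quorum on the canonical path $Q_1,\dots,Q_\ell$ (cheap --- $O(\ell)$ messages and latency --- but a bad representative can corrupt $m$); \che, a verification routine that \send invokes only with a small probability $\pcheckcall$, which resamples nodes from the quorums on the path to test whether \receiver received the correct message and, when it did not, tries to localize the faulty edge; and \update, invoked only when \che reports a corruption, which robustly recovers the message and runs an in-quorum protocol that eventually quarantines a responsible bad node. I would prove the two bullets separately; in both, the engine is linearity of expectation over all \send calls, fed by (i) a cost bound and a soundness bound for \che, and (ii) the invariant that every quarantine isolates a genuinely bad node and never a good one.

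\textbf{Cost (first bullet).} I would charge each \send call to three buckets. A \sendpath call costs $O(\ell)$ messages and $O(\ell)$ latency, plus the unavoidable $O(\log n)$ for \sender to reach all of its first quorum; over a workload this sums to $O(\mathcal{L})$. A \che call touches only $O(\ell)$ nodes (a bounded number per traversed quorum), so however often it fires it contributes only another $O(\mathcal{L})$ in total --- which is why $\pcheckcall$ is free to be tuned purely for the corruption bound. The expensive bucket is \update: one invocation robustly re-routes the message (cost $O(\ell_M \log^2 n)$ by the naive quorum-routing bound quoted in the introduction) and runs in-quorum agreement to decide the quarantine (cost $O(\log^5 n)$). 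The crucial lemma is that \update is invoked only $O(t)$ times overall, because each invocation makes irreversible progress toward quarantining a distinct bad node and each bad node absorbs $O(1)$ of them; this yields the $t\cdot(\ell_M\log^2 n+\log^5 n)$ term of the footnote. Dividing the aggregate by the number of \send calls, and using that $t$ is fixed, gives the amortized per-call bounds $O(\ell+\log n)$ messages and $O(\ell)$ latency.

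\textbf{Corruptions (second bullet).} A \send call fails only when some bad representative on its \sendpath alters $m$. Fix such a corrupted call. \che is invoked with probability $\pcheckcall$, and conditioned on invocation it detects the corruption with probability at least $\pcheckcatch$: a node resampled from the altered quorum is bad with probability at most $\fracbadnodes$, so with probability at least $1-\fracbadnodes$ it exposes the discrepancy. A detection is then converted, via the $\chkrows$-level localization inside \advancedcheck and across possibly several such detections, into the quarantine of an actual bad node lying on that path. Because the coins that decide whether \che runs and that drive its resampling are hidden from the adaptive, rushing adversary, each corruption a bad node causes triggers its own quarantine with probability at least $1/\chkprobinv$, independently of the adversary's choices. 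Hence the number of corruptions traceable to a single bad node before its quarantine is stochastically dominated by a geometric random variable of mean $\chkprobinv$, and no strategy does better: withholding a corruption merely postpones it without raising the per-corruption survival probability, and equivocating or partially corrupting still risks the resampled node's report. Summing over the at most $t$ bad nodes by linearity of expectation and tracking constants in the geometric estimate gives the bound $3t\chkprobinv$. (A simpler variant \simplecheck already yields $O(t\,\simplechkproinv)$; the recursion in \advancedcheck is what drives the toll down to $\chkprobinv$.)

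\textbf{Main obstacle.} Everything hinges on the joint design and analysis of \che and \update: a verification cheap enough to fold into $O(\mathcal{L})$, sound enough that a real corruption is caught with probability $\Omega(1/\chkprobinv)$, and localizing enough that when only an entire path of $\Theta(\ell)$ nodes is implicated --- and the bad nodes lie throughout \update --- the protocol still convicts and isolates a genuinely bad node, never a good one. Pinning the number of \update invocations to $O(t)$ rather than $O(t\chkprobinv)$ while keeping detection frequent, and proving the $\Omega(1/\chkprobinv)$ per-corruption detection probability against an adversary that watches the $\chkrows$ rounds of \advancedcheck, is the delicate part; the remainder is bookkeeping with linearity of expectation.
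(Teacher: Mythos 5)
Your overall architecture matches the paper's (amortize the cost of \update over $O(t)$ invocations; bound corruptions by the per-corruption detection probability), but two of your load-bearing claims do not hold for this protocol. First, the invariant that ``every quarantine isolates a genuinely bad node and never a good one'' is false and cannot be enforced: \update only identifies a pair of nodes \emph{in conflict}, and the most one can prove (the paper's Lemma~\ref{l:update}) is that at least one node of the pair is bad --- a bad node can always frame a good neighbor by claiming it never received its message, so good nodes do get marked. Consequently your ``irreversible progress, each bad node absorbs $O(1)$ invocations'' argument for the $O(t)$ bound on \update calls has a gap; worse, marks are not irreversible, because the protocol deliberately unmarks all marked nodes of a quorum once half of it is marked (otherwise framing could exhaust a quorum's good nodes). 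The paper closes this with the potential function $f(g,b)=b-g/3$ in Lemma~\ref{l:numc}: each detection marks at least one bad and at most one good node ($f$ gains $\geq 2/3$), and each unmarking event removes at least $3|Q|/8$ good marks versus at most $|Q|/8$ bad marks ($f$ does not decrease), giving at most $3t/2$ calls to \update. Your per-bad-node geometric accounting also silently assumes the detection quarantines \emph{the} bad node that corrupted the message, which is not guaranteed; the paper instead counts detections globally ($3t/2$) and multiplies by the expected $2\chkprobinv$ corruptions per detection, which is where the constant in $3t\chkprobinv$ actually comes from.

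Second, your justification of $\pcheckcatch\geq 1/2$ (``a node resampled from the altered quorum is bad with probability at most $\fracbadnodes$, so it exposes the discrepancy'') is exactly the argument that fails against the real adversary: bad relays on the path can forward the corrupted message consistently, so a single resampled good node in one quorum sees nothing inconsistent unless the good-to-good handoff straddles the corruption point. This is why \advancedcheck runs $\chkrows$ rounds with signed messages and why the paper proves detection via the critical-interval argument (Propositions~\ref{fact:transmits_reliably} and~\ref{fact:allmustbebad}, Lemma~\ref{l:intervalbad}, and a Chernoff bound over the rounds): the interval of ``deceived'' quorums shrinks from $x$ to $\log x$ per successful round, hence $O(\log^* n)$ rounds suffice with probability $\geq 1/2$. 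You flag this as the delicate part but supply no substitute, so the second bullet is not established. A smaller slip: $\pcheckcall$ cannot be ``tuned purely for the corruption bound,'' since \advancedcheck costs $O((\ell+\log n)(\log^* n)^2)$ messages per invocation; the choice $\pcheckcall=1/\chkprobinv$ is precisely what keeps the expected per-\send{} cost at $O(\ell+\log n)$.
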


The proof of this theorem is deferred to appendix \ref{app:proofoftheorem} due to space constraints.

\subsection{Related Work}
Several papers \cite{Frisanco:1997,Iraschko:1998,Murakami:1997,Caenegem:1997,Xiong:1999} have discussed different restoration mechanisms to preserve network performance by adding capacity and rerouting traffic streams in the presence of node or link failures. 
They present mathematical models to determine global optimal restoration paths, and provide methods for capacity optimization of path-restorable networks.


Our results are inspired by recent work on self-healing algorithms~\cite{boman2006brief,saia2008picking,hayes2008forgiving,hayes2009forgiving,pandurangan2011xheal,sarma2011edge}.
A common model for these results is that the following process repeats indefinitely: an adversary deletes some nodes in the network, and the algorithm adds edges.  The algorithm is constrained to never increase the degree of any node by more than a logarithmic factor from its original degree.  In this model, researchers have presented algorithms that ensure the following properties: the network stays connected and the diameter does not increase by much~\cite{boman2006brief,saia2008picking,hayes2008forgiving}; the shortest path between any pair of nodes does not increase by much~\cite{hayes2009forgiving}; and expansion properties of the network are approximately preserved~\cite{pandurangan2011xheal}.

Our results are also similar in spirit to those of Saia and Young~\cite{saia2008reducing} and Young et al.~\cite{young}, which both show how to reduce message complexity when transmitting a message across a quorum path of length $\ell$.  
The first result,~\cite{saia2008reducing}, achieves expected message complexity of $O(\ell \log n)$ by use of bipartite expanders.  However, this result is impractical due to high hidden constants and high setup costs.  The second result,~\cite{young}, achieves expected message complexity of $O(\ell)$.  
However, this second result requires the sender to iteratively contact a member of each quorum in the quorum path.

As mentioned earlier, several peer-to-peer networks have been described that provably enable reliable communication, even in the face of adversarial attack~\cite{FS2,datar2002butterflies,hildrum2003,NW03,scheideler:how,AS4}.  To the best of our knowledge, our approach applies to each of these  networks, with the exception of~\cite{datar2002butterflies}.  In particular, we can apply our algorithms to asymptotically improve the efficiency of the peer-to-peer networks from~\cite{FS2,hildrum2003,NW03,scheideler:how,AS4}. 

Similarly to Young et al. \cite{Max:2012}, we use threshold cryptography as an alternative to Byzantine Agreement.

\subsection{Organization of Paper}

The rest of this paper is organized as follows.  In Section~\ref{s:alg}, we describe our algorithms.
The analysis of our algorithms is shown in Section \ref{s:analysis}.
Section~\ref{s:empirical} gives empirical results showing how our algorithms can improve the efficiency of the butterfly networks of~\cite{FS2}.  Finally, we conclude and describe problems for future work in Section~\ref{s:conc}.

\section{Our Algorithms} \label{s:alg}

In this section, we describe our algorithms: \broadcast, \send, \sendpath, \che and \update.  

\subsection{Overview}

Recall that when node $\sender$ wants to send a message to a node $\receiver$, there is a canonical \emph{quorum path} $Q_1, Q_2, \dots, Q_\ell$, determined by the IDs of $\sender$ and $\receiver$. 
We assume that $Q_1$ is the leftmost quorum and $Q_\ell$ is the rightmost quorum; and we let $|Q_j|$ be the number of nodes in quorum $Q_j$, for $1 \leq j \leq \ell$.

The objective of our algorithms is marking all bad nodes in the network, where no more message corruption occurred.
In order to do that, we mark nodes after they are \emph{in conflict}, where a pair of nodes is \emph{in conflict} if at least one of these nodes accuses the other node in this pair.
If the half of nodes in any quorum are marked, we unmark these nodes.
Note that when we mark (or unmark) a node, it is marked (or unmarked) in its quorums and in their neighboring quorums.
Note that all nodes in the network are initially unmarked.

\begin{algorithm}[h]
\caption{$\broadcast (m, S)$}\label{a:broadcast}
\label{a:sign}
\textbf{Declarations:} \broadcastws is being called by a node $x$ in a quorum $Q$ in order to send a message $m$ to a set of nodes $S$.

\begin{enumerate}
\item Node $x$ sends message $m$ to all nodes in $Q$.
\item Each node in $Q$ signs $m$ by its private key share to obtain a signed-message share, and sends this signed-message share back to node $x$.
\item Node $x$ interpolates at least $\frac{7|Q|}{8}$ of the received signed-message shares to obtain the signed-message of $Q$.
\item Node $x$ sends the signed-message of $Q$ to all nodes in $S$.
\end{enumerate}
\end{algorithm}

\begin{algorithm}[h]
\caption{$\textsc{SEND}(m,\receiver)$}
\label{a:send}
\textbf{Declaration:}   
node $\sender$ wants to send message $m$ to node $\receiver$.
\begin{enumerate}
\item Node $\sender$ calls \sendpath($m,\receiver$).
\item With probability $\pcheckcall$, node $\sender$ calls \che ($m, \receiver$).
\end{enumerate}
\end{algorithm}

\begin{algorithm}[h]
\caption{$\textsc{SEND-PATH}(m,\receiver)$}
\label{a:randnodepath}
\textbf{Declarations:} 
$m$ is the message to be sent, and $\receiver$ is the destination. 


\begin{enumerate}
\item Node $\sender$ selects an unmarked node, $q_1$, in $Q_1$ uniformly at random. 
\item Node $\sender$ broadcasts to all nodes in $Q_1$ the message $m$ and $q_1$'s ID.
\item All nodes in $Q_1$ forward the message to node $q_1$.
\item For $i = 1, \ldots, \ell-1$ do
\begin{enumerate}
\item Node $q_i$ selects an unmarked node, $q_{i+1}$, in $Q_{i+1}$ uniformly at random. 
\item Node $q_i$ sends $m$ to node $q_{i+1}$.
\end{enumerate}
\item Node $q_{\ell}$ broadcasts $m$ to all nodes in $Q_\ell$.
\item All nodes in $Q_\ell$ send $m$ to node $\receiver$.
\end{enumerate}
\end{algorithm}

In our algorithms, we assume that when any node $x$ broadcasts a message $m$ to a set of nodes $S$, it executes $\broadcast (m, S)$. 
Before discussing our main algorithm, \send, we describe \broadcastws procedure (Algorithm \ref{a:broadcast}).

In \broadcast, we make use of the threshold cryptography as an alternative to Byzantine Agreement. 
We briefly describe the threshold cryptography, ($\eta,d$)-threshold scheme.

\medskip
\noindent
\textbf{Threshold Cryptography.} In ($\eta,d$)-threshold scheme, the secret key is distributed among $\eta$ parties, and any subset of more than $d$ parties can jointly reassemble the key. 
The secret key can be distributed by a completely distributed approach, \emph{Distributed Key Generation} (DKG) \cite{Kate:2009}.
The \emph{Distributed Key Generation} (DKG) generates the public/private key shares of all nodes in every quorum and the public/private key pair of each quorum.
The public key share of any node is known only to the nodes that are in the same quorum.
Moreover, the public key of each quorum is known to all nodes of this quorum and its neighboring quorums, but the private key of any quorum is unknown to all nodes.

\medskip
\noindent
\textbf{BROADCAST Algorithm.} In $\broadcast (m, S)$, we use in particular a ($|Q|, \frac{7}{8}|Q|-1$)-threshold scheme, where $|Q|$ is the quorum size.
When a node $x$ in a quorum $Q$ calls $\broadcast (m, S)$ to broadcast a message $m$ to a set of nodes $S$, firstly node $x$ sends $m$ to all nodes in $Q$. Then each node in the quorum signs $m$ by its private key share, and sends the signed message share to node $x$. 
Node $x$ interpolates at least $\frac{7|Q|}{8}$ of the received signed-message shares to construct a signed-message of the quorum.
We know that at least $7/8$-fraction of the nodes in any quorum are good. 
So if this signed-message is constructed, it ensures that at least $7/8$-fraction of the nodes in this quorum has received the same message $m$, agrees upon the content of the message and gives the permission to node $x$ of broadcasting this message.
Then node $x$ sends this constructed signed-message to all nodes in $S$.

%
%
%

\medskip
Now we describe our main algorithm, \send, that is stated formally in Algorithm~\ref{a:send}.
\send calls \sendpath, which is formally described in Algorithm~\ref{a:randnodepath}.
In \sendpath, node $\sender$ sends message $m$ to node $\receiver$ through a path of unmarked nodes selected uniformly at random.
However, \sendpath algorithm is vulnerable to corruption.  
Thus, with probability $\pcheckcall$, \send calls \che, which detects if a message was corrupted in the previous call to \sendpath, with probability $\pcheckcatch$.

In \che, the message is propagated from the leftmost quorum to the rightmost quorum in the quorum path through a path of subquorums, where a subquorum is a subset of unmarked nodes selected uniformly at random in a quorum.

\che is implemented as either \simplecheck (Algorithm~\ref{a:simplecheck}) or \advancedcheck (Algorithm~\ref{a:check}). 
For \simplecheck, $\pcheckcall = 1/(\log\log n)^2$ and $\pcheckcatch = 1-o(1)$, and it requires $O(\ell (\log \log n)^2 + \log n \cdot \log\log n)$ messages and has latency $O(\ell)$. 
For \advancedcheck, $\pcheckcall = 1/\chkprobinv$ and $\pcheckcatch \geq 1/2$, and it has message cost $O((\ell + \log n) (\log^* n)^2)$ and latency $O(\ell \log^* n)$.

Unfortunately, while \che can determine if a corruption occurred, it does not specify the location where the corruption occurred.  
Thus, if \che detects a corruption, \update (Algorithm~\ref{a:update}) is called.   
When \update is called, it identifies two neighboring quorums $Q_{i}$ and $Q_{i+1}$ in the path, for some $1 \leq i < \ell$, such that at least one pair of nodes in these quorums is in conflict and at least one node in such pair is bad. 
Then quorums $Q_{i}$ and $Q_{i+1}$ mark these nodes and notify all other quorums that these nodes are in.
All quorums in which these nodes are notify their neighboring quorums.
Recall that if the half of nodes in any quorum have been marked, these nodes are set unmarked in all quorums they are in, and their neighboring quorums are notified.

Moreover, we use \broadcastws in \sendpath and \che to handle any accusation in \update against node $\sender$ or node $\receiver$.
In \sendpath (or \che), we make node $\sender$ broadcast the message to all nodes in $Q_1$ which forward the message to the selected unmarked node (or subquorum) in $Q_1$; and when the message is propagated to the selected unmarked node (or subquorum) in $Q_\ell$, the message is broadcasted to all nodes in $Q_\ell$.

Our model does not directly consider concurrency.  
In a real system, concurrent calls to UPDATE that overlap at a single quorum may allow the adversary to achieve multiple corruptions at the cost of a single marked bad node.  However, this does not effect correctness, and, in practice, this issue can be avoided by serializing concurrent calls to
SEND.  For simplicity of presentation, we leave the concurrency aspect out of this paper.

Throughout this paper, we let $U_i$ be the set of unmarked nodes in $Q_i$, and we let $|U_i|$ be the number of nodes in $U_i$, for $1\leq i \leq \ell$.



\subsection{\simplecheck}\label{sec:simple-check}

\begin{algorithm}[h]
\caption{$\textsc{CHECK1} (m, \receiver)$}
\label{a:simplecheck}


\textbf{Initialization:} let the subquorums $S_1, S_2,...,S_\ell$ be initially empty.

\begin{enumerate}

\item Node $\sender$ constructs $R$ to be an $\ell$ by $|Q_{max}|$ by $\log \log n$ array of random numbers, where $|Q_{max}|$ is the maximum size of any quorum and $\log \log n$ is the size of any subquorum.
Note that $R[j,k]$ is a multiset of $\log\log n$ numbers selected uniformly at random with replacement between $1$ and $k$.

\item Node $\sender$ sets $m'$ to be a message consisting of $m$, $\receiver$, and $R$.

\item Node $\sender$ broadcasts $m'$ to all nodes in $Q_1$.

\item The nodes in $Q_1$ calculate the nodes of $S_1$ using the numbers in $R[1,|U_1|]$ to index $U_1$'s nodes sorted by their IDs.

\item The nodes in $Q_1$ send $m'$ to the nodes of $S_1$.

\item For $j \gets 1, \ldots, \ell-1$ do
\begin{enumerate}
\item The nodes of $S_j$ calculate the nodes of $S_{j+1}$ using the numbers in $R[j+1, |U_{j+1}|]$ to index $U_{j+1}$'s nodes sorted by their IDs.
\item The nodes of $S_j$ send $m'$ to all nodes of $S_{j+1}$.
\end{enumerate}
\item The nodes of $S_{\ell}$ broadcast $m'$ to all nodes in $Q_\ell$.
\end{enumerate}
\textbf{Note that:} 
throughout \simplecheck, if a node receives inconsistent messages or fails to receive an expected message, then it initiates a call to \update.  
\end{algorithm}

Now we describe \simplecheck that is stated formally as Algorithm \ref{a:simplecheck}.
\simplecheck is a simpler \che procedure compared to \advancedcheck, and, although it has a worse asymptotic message cost, it performs well in practice.

In \simplecheck, each subquorum $S_i$ has $\log \log n$ nodes that are chosen uniformly at random with replacement from the nodes in $U_i$ in the quorum path, for $1\leq i \leq \ell$.

First, node $\sender$ broadcasts the message to all nodes in $Q_1$, which send the message to the nodes of $S_1$. 
Then the nodes of $S_1$ forward this message to  the nodes of $S_\ell$ through a path of subquorums in the quorum path via all-to-all communication. 
Once the nodes in $S_\ell$ receive the message, they broadcast this message to all nodes in $Q_\ell$.
Further, if any node receives inconsistent messages or fails to receive an expected message, it initiates a call to \update.

Now we show that if the message was corrupted during the last call to \sendpath, the probability that \simplecheck fails to detect a corruption is $o(1)$ when $\ell = O(\log{n})$.

\begin{lemma}
If $\ell = O(\log{n})$, then \simplecheck fails to detect any message corruption with probability $o(1)$.
\end{lemma}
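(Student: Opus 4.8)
The plan is to pin down exactly when \simplecheck can fail to detect a corruption, and then finish with a one-line union bound. We may assume the source \sender is good, so that its opening \broadcastws delivers the true message $m'$ --- and in particular the true randomness array $R$ --- to every good node of $Q_1$ (these are at least a $7/8$ fraction of $Q_1$, since at most $1/8$ of any quorum is bad); the case of a bad source is accounted for separately in the analysis. We may also assume that the preceding \sendpath actually delivered an altered message $m_{SP}\neq m$, so that $m_{SP}$ is the value held by a (good) majority of $Q_\ell$ and obtained by the destination via majority filtering over $Q_\ell$; the complementary case, in which the message simply fails to arrive, is handled by the time-out mechanism.

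The key claim I would establish is: \emph{if every subquorum $S_1,\dots,S_\ell$ contains at least one good node, then some good node calls \update during \simplecheck}. Since the good nodes of $Q_1$ all hold the true $R$, they all compute the same canonical $S_1$ and forward the true $m'$ to each of its members; inductively, whenever the canonical $S_j$ contains a good node, that node forwards the true $m'$ (with the true $R$) to every member of the canonical $S_{j+1}$, so the true message travels along the entire canonical chain. Now either this true message reaches $Q_\ell$ --- in which case a good node of $Q_\ell$, already holding $m_{SP}$, receives the conflicting value $m$ and invokes \update --- or at some point a bad node deviates (sends a different value, a tampered $R$, or nothing at all); but then a good node of the next canonical subquorum simultaneously receives the true value from its good predecessor and the deviating value, or else detects a missing message, and invokes \update. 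Phantom subquorum chains spawned by a tampered $R$ do not help the adversary, because the canonical chain still delivers the true message to $Q_\ell$ and produces a conflict there. Hence \simplecheck fails to detect the corruption only if some $S_j$ consists \emph{entirely} of bad nodes.

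It remains to bound this probability. Each quorum $Q_j$ has at most a $1/8$ fraction of bad nodes, and, by the rule that unmarks a quorum's marked nodes as soon as they reach half of the quorum, at least half of $Q_j$ is always unmarked; hence the bad nodes are at most a $1/4$ fraction of $U_j$. Since the $\log\log n$ members of $S_j$ are drawn independently and uniformly at random from $U_j$ using the private randomness of \sender, $\Pr[S_j \text{ entirely bad}]\le (1/4)^{\log\log n} = (\log n)^{-2}$. Taking a union bound over the $\ell = O(\log n)$ subquorums gives $\Pr[\text{some } S_j \text{ entirely bad}] = O(1/\log n) = o(1)$, which is precisely where the hypothesis $\ell = O(\log n)$ is used. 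The main obstacle is the characterization step: it is elementary but requires care to rule out every adversarial strategy --- selective dropping, sending different values to different recipients, and tampering with $R$ to redirect the downstream subquorums --- and to verify that each of them still forces some good node to observe an inconsistency or a missing message; given the characterization, the probabilistic estimate is immediate.
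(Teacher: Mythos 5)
Your proof is correct and takes essentially the same route as the paper's: failure requires some subquorum to consist entirely of bad nodes, each random pick from the unmarked nodes is bad with probability at most $1/4$ (since at most $1/8$ of a quorum is bad and at least half of it is unmarked), so a fixed subquorum is all bad with probability at most $(1/4)^{\log\log n} = 1/\log^{2} n$, and a union bound over the $\ell = O(\log n)$ subquorums gives $o(1)$. The only difference is that you spell out the characterization step (if every subquorum contains a good node, some good node observes an inconsistency or a missing message and calls \update), which the paper asserts in a single sentence without detail.
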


\begin{proof}
\simplecheck succeeds in detecting the message corruption if every subquorum has at least one good node.
We know that at least $1/2$-fraction of the nodes in any quorum are unmarked, then the probability that an unmarked bad node is selected uniformly at random is at most $1/4$.
Thus the probability of any subquorum having only bad nodes is at most
$(1/4)^{\log\log{n}} = 1/\log^2{n}$.  
Union-bounding over all $\ell$ subquorums, the probability of \simplecheck failing is at most $\ell/\log^2{n}$.  
For $\ell = O(\log{n})$, the probability that \simplecheck fails is $o(1)$. \qed
\end{proof}

\subsection{\advancedcheck}\label{app:check}

\begin{algorithm}[h]
\caption{$\textsc{CHECK2} (m, \receiver)$}
\label{a:check}


\textbf{Initializations:}  
node $\sender$ generates public/private key pair $k_{p},k_{s}$ to be used in this procedure; also let the subquorums $S_1, S_2,...,S_\ell$ be initially empty.

\begin{algorithmic}
\For{$i \gets 1, \ldots, \chkrows$}
\begin{enumerate}
\item Node $\sender$ constructs $R$ to be an $\ell$ by $|Q_{max}|$ array of random numbers, where $|Q_{max}|$ is the maximum size of any quorum.  
Note that $R[j,k]$ is a uniformly random number between $1$ and $k$.

\item Node $\sender$ sets $m'$ to be $m$, $k_p$, $\receiver$, and $R$ signed by $k_{s}$.

\item Node $\sender$ broadcasts $m'$ to all nodes in $Q_1$.

\item The nodes in $Q_1$ calculate the node, $x_{1} \in U_1$, to be added to $S_1$ using the number $R[1,|U_1|]$ to index $U_1$'s nodes sorted by their IDs.

\item The nodes in $Q_1$ send $m'$ to the nodes of $S_1$.

\item For $j \gets 1, \ldots, \ell-1$ do
\begin{enumerate}
\item All $i$ nodes in $S_{j}$ calculate the node, $x_{j+1} \in U_{j+1}$, to be added to $S_{j+1}$ using the number $R[j+1,|U_{j+1}|]$ to index $U_{j+1}$'s nodes sorted by their IDs.

\item The nodes in $S_j$ send $m'$ to node $x_{j+1}$.

\item Node $x_{j+1}$ sends $m'$ to all the nodes in $S_{j+1}$. 
\end{enumerate}

\item The nodes in $S_{\ell}$ broadcast $m'$ to all nodes in $Q_\ell$.

\end{enumerate}
\EndFor
\end{algorithmic}
\textbf{Note that:}
throughout this procedure, if a node has previously received $k_{p}$, then it verifies each subsequent message with it; also if a node receives inconsistent messages or fails to receive and verify an expected message, then it initiates a call to \update. 
\end{algorithm}

In this section, we describe \advancedcheck algorithm, which is stated formally as Algorithm~\ref{a:check}.  
Firstly, node $\sender$ generates a public/private key pair $k_{p},k_{s}$ to let the nodes verify any message received.
Then \advancedcheck runs for $4\log^{*}n$ rounds, and has a subset $S_{j} \subset U_j$ for each quorum $Q_{j}$ in the quorum path, for $1\leq j \leq \ell$.
Every $S_{j}$ is an incremental subquorum, where each $S_{j}$ is initially empty; and in each round, an unmarked node, $x_j$, is selected uniformly at random from all nodes in $U_{j}$ and is added to $S_{j}$, for all $1 \leq j \leq \ell$.

In each round, node $\sender$ broadcasts the message to all nodes in $Q_1$, which forward such message to the nodes of $S_1$.
Then for all $1 \leq j < \ell$, all nodes in $S_j$ send the message to node $x_{j+1}$, which forwards the message to all nodes in $S_{j+1}$.
Finally, all nodes in $S_{\ell}$ broadcast the message to all nodes in $Q_\ell$.

Note that if any node receives inconsistent messages or fails to receive and verify any expected message in any round, it initiates a call to \update.

\begin{figure}
\centerline{\includegraphics[scale=0.2]{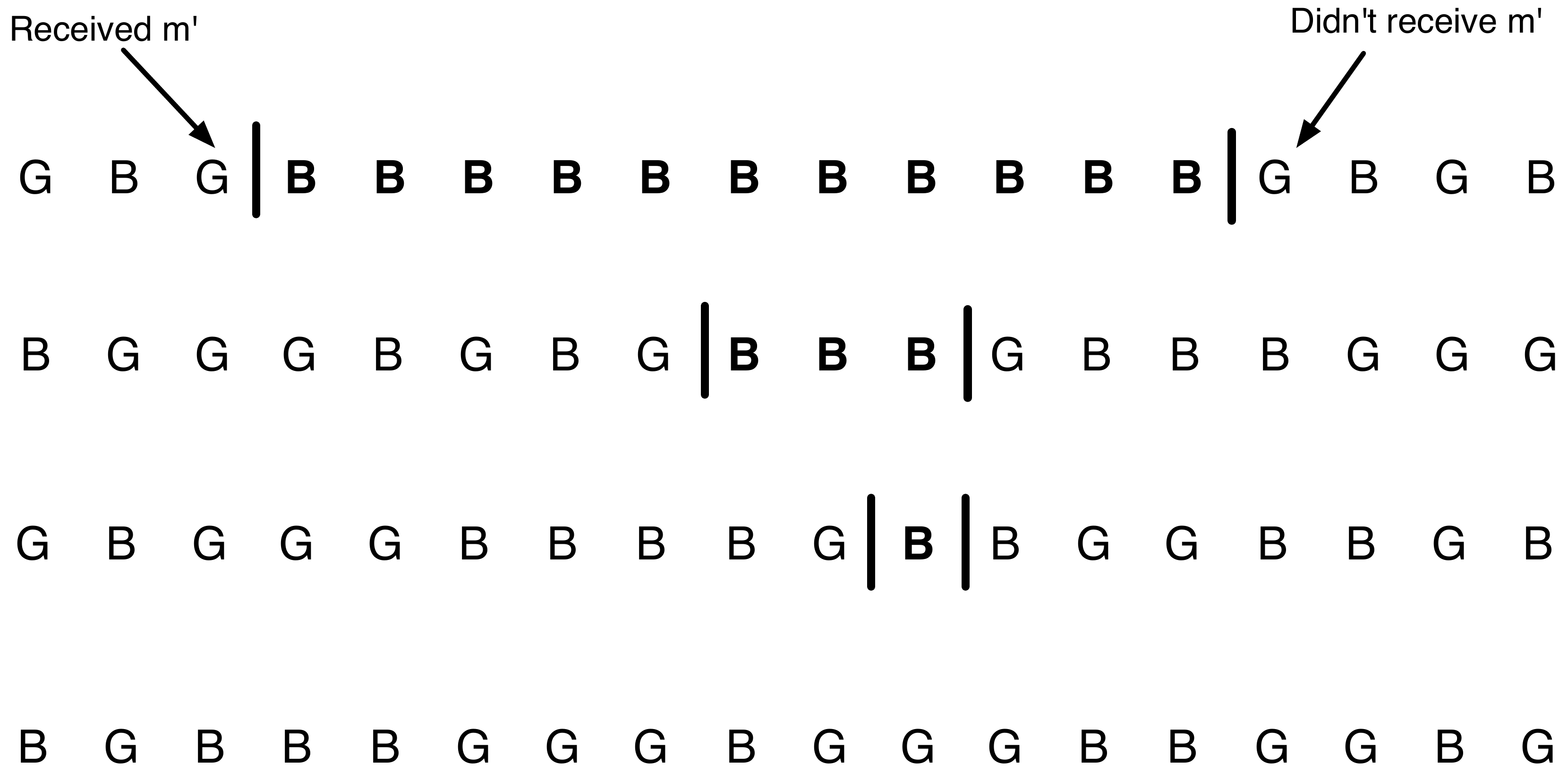}}
\caption{Example run of \advancedcheck}
\label{f:check}
\end{figure}

An example run of \advancedcheck is illustrated in Figure~\ref{f:check}.  
In this figure, there is a column for each quorum in the quorum path and a row for each round of \advancedcheck.  
For a given row and column, there is a G or B in that position depending on whether the node selected in that particular round and that particular quorum is good (G) or bad (B).

Recall that the adversary knows \advancedcheck algorithm, but in any round, the adversary does not know which nodes will be selected to participate for any subsequent round. 
The adversary's strategy is to maintain an interval of bad nodes in each row to corrupt (or drop) the message  so that \advancedcheck will not be able to detect the corruption.

In Figure \ref{f:check}, the intervals maintained by the adversary are outlined by a left and a right bar in each row; where the best interest of the adversary is to maintain the longest interval of bad nodes in the first row.
The left bar in each row specifies the rightmost subquorum in which there is some good node that receives $m'$.  
The right bar in each row specifies the leftmost subquorum in which there is some good node that does not receive $m'$.

Note that, as rounds progress, the left bar can never move leftwards, because a node that has already received $k_{p}$ will call \update unless it receives messages signed with $k_{p}$ for all subsequent rounds.
Note further that the right bar can never move rightwards, since the nodes of each subset $S_j$ send $m'$ to the node $x_{j+1}$, which forwards such message to all nodes that are currently in $S_{j+1}$, for $1\leq j < \ell$.
Finally, when these two bars meet, a corruption is detected.

Intuitively, the reason \advancedcheck requires only $4\log^{*}n$ rounds is because of a probabilistic result on the maximum length run in a sequence of coin tosses.  
In particular, if we have a coin that takes on value ``B'' with probability at most $1/4$, and value ``G'' with probability at least $3/4$, and we toss it $x$ times, then the expected length of the longest run of B's is $\log x$.  
Thus, if in some round, the distance between the left bar and the right bar is $x$, we expect in the next round this distance will shrink to $\log x$.  
Intuitively, we might expect that, if the quorum path is of length $\ell$, then $O(\log^{*} \ell)$ rounds will suffice before the distance shrinks to $0$.
This intuition is formalized in Lemmas \ref{l:intervalbad} and \ref{l:check} of Section~\ref{s:analysis}.

In contrast to \simplecheck, even though all nodes in column $9$ are bad in Figure \ref{f:check}, \advancedcheck algorithm handles this case since the good node in row $3$ and column $10$ receives the message $m'$.

\begin{algorithm}[h]
\caption{$\update$}
\label{a:update}

\begin{enumerate}
\item The node, $x$, making the call to \update broadcasts this fact to all nodes in its quorum, $Q'$, along with all the messages that $x$ has received during this call to \send.  The nodes in $Q'$ verify that $x$ received inconsistent messages before proceeding.
\item The quorum $Q'$ propagates the fact that a call to \update is occurring, via all-to-all communication, to all quorums $Q_{1}, Q_{2}, \ldots, Q_{\ell}$.
\item Each node involved in the last call to \sendpath or \che, except node $\sender$ and node $\receiver$, compiles all messages they have received (and from whom) in that last call, and broadcasts all these messages to all nodes in its quorum and the neighboring quorums.
\item Each node involved in the last call to \sendpath, except node $\sender$ and node $\receiver$, broadcasts all messages they have sent (and to whom) in that last call, to all nodes in its quorum and the neighboring quorums.
\item A node $v$ is \emph{in conflict} with a node $u$ if:
\begin{enumerate}
\item node $u$ was scheduled to send a message to node $v$ at some point in the last call to \sendpath or \che; and
\item node $v$ does not receive an expected message from node $u$ in step 3 or step 4, or node $v$ receives a message in step 3 or step 4 that is different than the message that it has received from node $u$ in the last call to \sendpath or \che.
\end{enumerate}
\item For each pair of nodes, $(u,v)$, that are in conflict in ($Q_k$, $Q_{k+1}$), for $1\leq k <\ell$:
\begin{enumerate}
\item node $v$ sends a \emph{conflict} message ``(u,v)" to all nodes in $Q_v$, 
\item each node in $Q_v$ forwards this conflict message to all nodes in $Q_v$ and all nodes in $Q_u$, 
\item quorum $Q_u$ (or $Q_v$) sends the conflict message to all other quorums that node $u$ (or $v$) is in, and 
\item each quorum that node $u$ or node $v$ is in sends such conflict message to its neighboring quorums.
\end{enumerate}
\item Each node that receives this conflict message mark the nodes $u$ and $v$.
\item If the half of nodes in any quorum have been marked, they are set unmarked in all quorums these nodes are in, and their neighboring quorums are notified.
\end{enumerate}
\end{algorithm}

\subsection{\update}

When a message corruption occurs and \che detects this corruption, \update is called. 
Now the task of \update is to 
1) determine the location in which the corruption occurred; 
2) mark the nodes that are in conflict.
The \update algorithm is described formally as Algorithm~\ref{a:update}. When \update starts, all nodes in each quorum in the quorum path are notified.

To determine the location in which the corruption occurred, each node previously involved in \sendpath or \che broadcasts to all nodes in its quorum and the neighboring quorums all messages they have received in the previous call to \sendpath or \che. 
Moreover, to announce the unmarked nodes that are selected uniformly at random in the last call to \sendpath, every node involved in the last call to \sendpath, broadcasts all messages they have sent (and to whom) in such call, to all nodes in its quorum and the neighboring quorums.

We say that a node $v$ is \emph{in conflict} with a node $u$ if 
1) node $u$ was scheduled to send a message to node $v$ at some point in the previous call to \sendpath or \che; and 2) node $v$ does not receive an expected message from node $u$ in this call to \update, or node $v$ receives a message in this \update that is different than the message that it has received from node $u$ in the previous call to \sendpath or \che.

For each pair of nodes, $(u,v)$, that are in conflict in quorums ($Q_k, Q_{k+1}$) for $1\leq k < \ell$, these two quorums send a \emph{conflict} message ``(u,v)" to all quorums in which node $u$ or node $v$ is and to all neighboring quorums to mark these nodes.
Recall that if the half of nodes in any quorum are marked, this quorum notifies all other quorums in which these nodes are via all-to-all communication to unmark such nodes, and their neighboring quorums are notified as well.

\section{Analysis} \label{s:analysis}
In this section, we prove the lemmas required for Theorem~\ref{thm:corruptions}. Throughout this section, \sendpath calls \advancedcheck.
Also we let all logarithms be base 2.

\begin{lemma}\label{l:intervalbad}
When a coin is tossed $x$ times independently given that each coin has a tail with probability at most $1/4$, then the probability of having any substring of length $\max(1, \log{x})$ being all tails is at most $1/2$.
\end{lemma}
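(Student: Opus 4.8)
The plan is to prove this with a direct union bound over all possible locations of an all-tails run, after first dispatching the degenerate small-$x$ cases.

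First I would handle $x \le 1$: here $\max(1,\log x) = 1$, so the event in question is just that a single toss comes up tails, which by assumption has probability at most $1/4 \le 1/2$.

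For $x \ge 2$ we have $\log x \ge 1$, and I read ``a substring of length $\log x$'' as a block of $L := \lceil \log x \rceil$ consecutive tosses; rounding up only lengthens the block, which can only decrease the probability that it is all tails, so the bound we prove is at least as strong as the one claimed. There are at most $x$ starting positions for such a block. For any fixed block, by independence the probability that all $L$ of its tosses are tails is at most $(1/4)^{L} \le (1/4)^{\log x} = 2^{-2\log x} = 1/x^2$, where I use only that each toss is tails with probability at most $1/4$ (the per-toss probabilities need not be equal). A union bound over the at most $x$ starting positions then gives that the probability some length-$L$ block is all tails is at most $x \cdot 1/x^2 = 1/x \le 1/2$, using $x \ge 2$ in the last step.

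I do not anticipate any real obstacle here: the only mildly delicate points are interpreting the non-integral quantity $\log x$ (resolved by rounding up, which is the favorable direction) and the boundary case where the $\max$ with $1$ is active. The content is simply the standard ``longest run of a biased coin in $x$ flips is $O(\log x)$'' estimate, packaged in the exact form needed later to bound the number of rounds of \advancedcheck in Lemma~\ref{l:check}.
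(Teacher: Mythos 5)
Your proposal is correct and follows essentially the same route as the paper: a union bound over the at most $x$ starting positions, with each length-$(\log x)$ block being all tails with probability at most $(1/4)^{\log x} = 1/x^2$, giving $1/x \le 1/2$ for $x \ge 2$, and the case $x=1$ handled separately. Your explicit treatment of the non-integer $\log x$ via rounding up is a minor tidying of a detail the paper leaves implicit, not a different argument.
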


\begin{proof}
The probability of a specific substring of length $\log{x}$ being all tails is 
$$\left(\frac{1}{4}\right)^{\log{x}} = \frac{1}{x^2}.$$
Union bounding over all possible substrings of length $\log{x}$, then the probability of any all-tailed substring existing is at most $x\frac{1}{x^2}$; or equivalently, for $x \geq 2$, $\frac{1}{x} \le \frac{1}{2}$; and for $x=1$, the probability of having a substring of $\max(1, \log{x})$ tail is trivially $1/4$.  \qed 
\end{proof}

The next lemma shows that the algorithm \advancedcheck catches corruptions with probability at least $1/2$.

\begin{lemma}\label{l:check}
Assume some node selected uniformly at random in the last call to \sendpath has corrupted a message.  Then when the algorithm \advancedcheck is called, with probability at least $1/2$, some node will call \update.
\end{lemma}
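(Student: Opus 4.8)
The plan is to track the distance $d_r := R_r - L_r$ between the right bar and the left bar across the $4\log^{*}n$ rounds of \advancedcheck, and to show that with probability at least $1/2$ this distance reaches $0$ within that many rounds — at which point, by the monotonicity facts already recorded in the description of \advancedcheck (the left bar never moves leftwards, the right bar never moves rightwards, and when they meet some good node fails to obtain an expected copy of the signed $m'$ and calls \update), we are done.

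First I would fix the starting configuration. Since the node that corrupted the message during the last call to \sendpath lies on the quorum path, the good nodes of $Q_{\ell}$ end up carrying a value different from the genuine message $m$; meanwhile $\sender$, being good, re-injects the correctly signed $m'$ into $Q_1$ each round, so the leftmost subquorums always hold $m'$. Hence the left bar is well defined and at least $1$, the right bar is well defined (anchored no further right than $Q_{\ell}$, whose good nodes will call \update the instant a genuine $m'$ reaches them), and therefore $d_1 \le \ell$. Since a quorum path visits distinct quorums, $\ell \le n$, so $\log^{*}\ell \le \log^{*}n$.

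The engine of the proof is a one-round contraction bound: conditioned on the entire history through round $r$ — all adversarial behaviour and all random selections revealed so far — I claim that if $d_r = x \ge 2$ then $\Pr[d_{r+1} \le \max(1,\log x)] \ge 1/2$, and if $d_r = 1$ then with probability at least $1/2$ the bars meet and \update is called. Between the two bars the adversary can only sustain a contiguous block of bad nodes that drops $m'$ (it cannot forge $\sender$'s signature, so it cannot substitute a different message); the single new node each subquorum acquires in round $r+1$ is, independently of everything so far, bad with probability at most $1/4$, since each quorum is at least half unmarked and at most a $1/8$ fraction bad. Applying Lemma~\ref{l:intervalbad} to the at most $x$ fresh ``coins'' lying in the gap shows that with probability at least $1/2$ none of their maximal all-bad runs has length $\max(1,\log x)$; given this, the fact that the left bar never moves leftwards (a good node holding $k_p$ calls \update unless it is re-sent a validly signed $m'$ every round), together with the fact that each $x_{j+1}$ forwards $m'$ to all current members of $S_{j+1}$, forces the two bars to close to within $\max(1,\log x)$. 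I expect this to be the main obstacle: turning ``a gap with no long bad run gets swept clean'' into a precise statement requires carefully interleaving these two monotonicity mechanisms, and one must ensure the $1/2$ bound holds conditionally on all prior (adversarial) history, since that is what licenses treating the rounds as independent below.

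Finally I would assemble the pieces. Because $L_r$ is non-decreasing and $R_r$ non-increasing, $d_r$ never grows, so a round in which the contraction event fails is merely wasted, not harmful. Letting $Y_r$ be the indicator that round $r$ realizes the contraction event above, the sequence $(Y_r)$ stochastically dominates i.i.d.\ $\mathrm{Bernoulli}(1/2)$ variables. Since iterating $x \mapsto \max(1,\log x)$ from a value at most $n$ reaches $1$ after at most $\log^{*}n$ applications and $0$ after one more, it suffices that at least $\log^{*}n+1$ of the $4\log^{*}n$ rounds have $Y_r = 1$. The expected number of such rounds is $2\log^{*}n$, which exceeds $\log^{*}n+1$, so a Chernoff/binomial-tail estimate — verifying the single small case $\log^{*}n = 1$ directly, where $\Pr[\mathrm{Bin}(4,1/2) \ge 2] = 11/16 > 1/2$ — gives that at least $\log^{*}n+1$ of them succeed with probability at least $1/2$. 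Hence \update is called with probability at least $1/2$, as claimed.
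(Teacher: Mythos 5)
Your proposal is correct and follows essentially the same route as the paper's proof: your left/right bars are the paper's ``critical intervals,'' your one-round contraction via Lemma~\ref{l:intervalbad} corresponds to the paper's Propositions~\ref{fact:transmits_reliably} and~\ref{fact:allmustbebad}, and your final binomial/Chernoff tail over the $\chkrows$ rounds matches the paper's concluding estimate. If anything, your handling of the per-round bound conditionally on the adversary's history (stochastic domination rather than asserted independence) and the explicit $\log^{*}n+1$ count with the small-case check are slightly more careful than the paper's write-up.
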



\begin{lemma}\label{l:update}
If some node selected uniformly at random in the last call to \sendpath has corrupted a message, then the algorithm \update will identify a pair of neighboring quorums $Q_{j}$ and $Q_{j+1}$, for some $1 \leq j < \ell$, such that at least one pair of nodes in these quorums is in conflict and at least one node in such pair is bad.
\end{lemma}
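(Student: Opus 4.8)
\medskip\noindent\textbf{Proof plan.}\quad The plan is to trace the message along the quorum path, locate the earliest place where a bad node mishandled it, and then read off a conflicting pair from the information that every participant is forced to publish during \update (which is invoked whenever the corruption is detected, cf.\ Lemma~\ref{l:check}). First I would pin down a bad culprit on the path. Node $\sender$ originates $m$ honestly; an honest relay node forwards exactly what it received, or else calls \update; and the channel between two fixed nodes is not tampered with by the adversary. Hence the relayed value is preserved along every maximal run of honest path nodes, so, since the delivered value differs from $m$, there is an index $j$ with $1\le j<\ell$ such that $q_j$ received the correct value $m$ (the opening \broadcastws from $\sender$ into $Q_1$ cannot silently alter $m$, by the threshold-signature/honest-majority guarantee of \broadcastws, so $Q_1$'s honest majority indeed holds $m$) but forwarded to $q_{j+1}$ something other than $m$, or forwarded nothing. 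An honest node in this position forwards $m$, so $q_j$ is bad; and $q_j\in Q_j$, $q_{j+1}\in Q_{j+1}$ with $Q_j,Q_{j+1}$ adjacent on the path. (If the corruption is ``repaired'' further along the path, pick $j$ at the boundary where the value reaching the next honest node first differs from $m$; the closing \broadcastws into $Q_\ell$ is treated identically, its guarantee again implicating $q_\ell$.)

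Next I would use the \update broadcasts. After the all-to-all notification of step~2, every path node except $\sender$ and $\receiver$ uses \broadcastws to publish to all nodes of its own quorum and of the neighboring quorums both the messages it received (step~3) and the messages it sent and to whom (step~4); by the reliability of \broadcastws these publications are identical for all honest nodes of those quorums. In particular the honest majorities of $Q_{j-1}$, $Q_j$, $Q_{j+1}$ share a common record of what $q_{j-1}$ claims to have relayed to $q_j$, what $q_j$ claims to have received and relayed, and what $q_{j+1}$ claims to have received from $q_j$.

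Finally I would extract the conflicting pair. If every path node were a faithful relay, the chain of published ``received''/``relayed'' values would be constant and equal to $m$ all the way from $Q_1$ to $Q_\ell$; since the message was corrupted, this chain must break between some two adjacent path nodes $q_i,q_{i+1}$ ($1\le i<\ell$) in one of two ways --- either the value $q_i$ publishes as relayed to $q_{i+1}$ differs from the value $q_{i+1}$ publishes as received from $q_i$, or $q_i$ publishes having received one value and relayed a different one, so $q_i$ is by its own record not relaying faithfully and is placed in conflict with the neighbor it shortchanged. In either case a pair of $(Q_i,Q_{i+1})$ (or $(Q_{i-1},Q_i)$) is \emph{in conflict}, and at least one of its nodes is bad: if both were honest the channel between them would be untampered and both would publish truthfully, so their published values would agree. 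This is exactly the assertion of the lemma.

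The hard part will be making the last step fully rigorous --- matching the informal ``break in the chain'' to the precise notion of \emph{in conflict} that \update actually computes from the public step~3/step~4 broadcasts, and in particular ruling out that a bad node evades every conflicting pair by publishing internally inconsistent receipts or by lying in a way that is locally consistent with a colluding bad neighbor. A secondary point is the careful handling of the two end-of-path \broadcastws steps and of the case where $\receiver$ is itself bad; both are resolved via the threshold-signature guarantee of \broadcastws together with the fact that the corruption is witnessed by some honest node, which anchors the chain at the true value $m$.
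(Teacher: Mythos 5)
Your plan is correct and takes essentially the same route as the paper's proof: the paper likewise shows first that two good nodes can never be in conflict (both report truthfully over untampered links), and then anchors the argument at a good node that received the true message and a good node that did not, arguing that if no consecutive pair between them were in conflict, the published reports would force the latter to have received it as well --- your ``break in the published chain'' is exactly the contrapositive of that induction, and the good anchors resolve the collusion/self-inconsistent-receipt worry you flag in just the way you sketch. The only piece the paper adds beyond your outline is a short separate argument that the node that successfully initiated \update after seeing inconsistent messages in \advancedcheck also forces some pair of its participants into conflict.
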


\begin{proof}

First we show that if a pair of nodes $x$ and $y$ is in conflict, then at least one of them is bad.  Assume not. Then both $x$ and $y$ are good.  Then node $x$ would have truthfully reported what it received; any message that $x$ received would have been sent directly to $y$; and $y$ would have truthfully reported what it received from $x$.  But this is a contradiction, since for $x$ and $y$ to be in conflict, $y$ must have reported that it received from $x$ something different than what $x$ reported receiving. 

Now consider the case where a selected unmarked bad node corrupted a message in the last call to \sendpath. 
By the definition of corruption, there must be two good nodes $q_j$ and $q_k$ such that $j < k$ and $q_j$ received the message $m'$ sent by node \sender, and $q_k$ did not. We now show that some pair of nodes between $q_j$ and $q_k$ will be in conflict.  
Assume this is not the case.  Then for all $x$, where $j \leq x < k$, nodes $q_x$ and $q_{x+1}$ are not in conflict.  But then, since node $q_j$ received the message $m'$, and there are no pairs of nodes in conflict, it must be the case that the node $q_k$ received the message $m'$. This is a contradiction.  
Thus, \update will find two nodes that are in conflict, and at least one of them will be bad.

Now we prove that at least one pair of nodes is found to be in conflict as a result of calling \update.  
Assume that no pair of nodes is in conflict.  Then for every pair of nodes $x$ and $y$, such that $x$ was scheduled to send a message to $y$ during any round $i$ of \advancedcheck, $x$ and $y$ must have reported that they received the same message in round $i$.  In particular, this implies via induction, that for every round $i$, for all $j$, where $1 \leq j \leq \ell$, all nodes in the sets $S_{j}$ must have broadcasted that they received the message $m'$ that was initially sent by node $\sender$ in round $i$.  But if this is the case, the node $x$ that initially called \update would have received no inconsistent messages.  This is a contradiction since in such a case, node $x$ would have been unsuccessful in trying to initiate a call to \update.  Thus, some pair of nodes must be found to be in conflict, and at least one of them is bad.
\qed


\end{proof}

The next lemma bounds the number of times that \update must be called before all bad nodes are marked.

\begin{lemma}\label{l:numc}
\update is called at most $3t/2$ times before all bad nodes are marked.
\end{lemma}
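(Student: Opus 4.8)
The plan is to set up a potential-function / amortization argument that charges each call to \update against newly marked bad nodes, while carefully handling the ``unmarking'' rule (step 8 of \update) that resets a quorum when half its nodes become marked. First I would invoke Lemma \ref{l:update}: whenever \che detects a corruption and \update is called, at least one pair of nodes $(u,v)$ in neighboring quorums $Q_j, Q_{j+1}$ is found in conflict, and at least one of $u,v$ is bad. So every call to \update marks at least one bad node (possibly also a good node). The naive bound would then be ``$\le t$ calls,'' but the subtlety is that marked bad nodes can later be \emph{unmarked} when some quorum hits the half-marked threshold, so a bad node could in principle be re-marked many times. The crux of the argument is to show that the total amount of unmarking is small enough that the total number of \update calls is still $O(t)$, with the constant being $3/2$.

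The key steps, in order: (1) Observe that each call to \update marks at least one bad node, and marks at most one good node (the conflict pair has size two, and at least one member is bad, so at most one good node is newly marked per call). (2) Analyze the unmarking step: a quorum $Q$ unmarks all its currently-marked nodes only when at least $|Q|/2$ of its nodes are marked; since a quorum contains at most $|Q|/8$ bad nodes, whenever an unmarking event occurs at $Q$, strictly more than $|Q|/2 - |Q|/8 = 3|Q|/8$ of the unmarked-away nodes are good — i.e., each unmarking event ``wastes'' at least three times as many good marks as bad marks. (3) Set up a charging scheme: let $B$ be the total number of bad-node marking events and $G$ the total number of good-node marking events over the whole execution. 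By step (1), the number of \update calls is at most $B$ (each call contributes $\ge 1$ to $B$), and also $G \le B$ (each call contributes at most $1$ to each). A bad node gets permanently marked at the end; a bad node that gets re-marked does so because an intervening unmarking event cleared it, and by step (2) that unmarking event simultaneously cleared at least three good marks. (4) Turn this into an inequality: the number of ``extra'' bad markings (beyond the final $\le t$ permanent ones) is at most $1/3$ of the number of good markings cleared, hence $B \le t + G/3 \le t + B/3$, giving $B \le 3t/2$. Since the number of \update calls is at most $B$, we conclude \update is called at most $3t/2$ times before all bad nodes are marked.

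I would be careful on two points. First, the precise accounting of how many good vs. bad nodes are marked per \update call: I should confirm from Algorithm \ref{a:update} that a single call marks exactly the nodes of the conflict pairs it finds, and argue (or cite a companion lemma) that it suffices to process one conflict pair, so at most one good node is added per call — if multiple pairs are marked per call the constant changes and I'd need the aggregate version of the inequality. Second, the unmarking bound must be applied to \emph{good marks cleared}, not good nodes: a good node could be marked, unmarked, and re-marked, and I want to make sure each clearing of a good mark is counted once and matched against the bad marks it ``pays for.'' The cleanest formulation is the global inequality in step (4): total bad markings $\le$ (final marked bad nodes) $+$ (bad nodes cleared by unmarking events) and total good-marks-cleared $\ge 3\times$(bad-marks-cleared), combined with good markings $\le$ bad markings.

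The main obstacle I anticipate is the bookkeeping around unmarking: making the ``$3\times$'' gap between cleared good marks and cleared bad marks rigorous requires knowing that within any quorum the bad fraction stays $\le 1/8$ among the nodes that get unmarked, which follows from the quorum property, but one must be sure the threshold ``half the nodes in the quorum'' is measured against $|Q|$ (fixed) rather than against $|U_i|$ (which shrinks) — the algorithm statement says ``the half of nodes in any quorum,'' which I read as $|Q|/2$, and the whole constant $3/2$ hinges on that reading.
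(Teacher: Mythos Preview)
Your proposal is correct and is essentially the same argument the paper gives: both hinge on the observations that each call to \update marks at least one bad node and at most one good node, and that any unmarking event clears at least $3|Q|/8$ good marks versus at most $|Q|/8$ bad marks, and both combine these to get the bound $3t/2$. The only cosmetic difference is packaging: the paper defines the potential $f(g,b)=b-g/3$ on the \emph{current} counts of marked good and bad nodes, shows $f$ increases by at least $2/3$ per \update call and is non-decreasing across unmarking events, and uses $f\le t$ at termination; your version tracks total marking/unmarking \emph{events} and derives the equivalent inequality $B\le t+G/3\le t+B/3$.
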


\begin{proof}
By Lemma~\ref{l:update}, if a message corruption occurred in the last call to \sendpath, and is caught by \advancedcheck, then \update is called. \update identifies at least one pair of nodes that are in conflict. 

Now let $g$ be the number of good nodes that are marked, and let $b$ be the number of bad nodes that are marked.
Also let $f(g,b) = b - g/3$.

For each corruption caught,
at least one bad node is marked, and so $f(g,b)$ increases by at least $2/3$ since $b$ increases by at least $1$ and $g$ increases by at most $1$.
When a $1/2$-fraction of nodes in any quorum $Q$ of size $|Q|$ are unmarked, 
$f(g,b)$ further increases by at least $0$ since $g$ decreases by at least $\frac{3|Q|}{8}$ and $b$ decreases by at most $\frac{|Q|}{8}$.

Hence, $f(g,b)$ is monotonically increasing by at least $2/3$ for each corruption caught.
When all bad nodes are marked, $f(g, b) \leq t$.
Therefore, after at most $3t/2$ calls to \update, all bad nodes are marked. 
\qed
\end{proof}

\section{Empirical Results}
\label{s:empirical}

\subsection{Setup}

In this section, we empirically compare the message cost and the fraction of messages corrupted of two algorithms via simulation.  
The first algorithm we simulate is \bfly algorithm from \cite{FS}.  This algorithm has no self-healing properties, and simply uses all-to-all communication between quorums that are connected in a butterfly network.  
The second algorithm is \llog, wherein we apply our self-healing algorithm in the butterfly networks using \simplecheck.

In our experiments, we consider two butterfly network sizes: $n = 14{,}116$ and $n = 30{,}509$, where $\ell = \lfloor\log n\rfloor - 2$, the quorum size is $\lfloor 4\log n\rfloor$ and the subquorum size is $\lfloor\log \log n\rfloor$.
Moreover, we do our experiments for several fractions of bad nodes such as $f$ equal to $1/8, 1/16, 1/32$ and $1/64$, where $f = t/n$.

Our simulations consist of a sequence of calls to \emph{SEND} over the network, given a pair of nodes $\sender, \receiver$, chosen uniformly at random, such that node $\sender$ sends a message to node $\receiver$. 
We simulate an adversary who chooses at the beginning of each simulation a fixed number of nodes to control uniformly at random without replacement.  
Our adversary attempts to corrupt messages between nodes whenever possible. 
Aside from attempting to corrupt messages, the adversary performs no other attacks.

\subsection{Results}
The results of our experiments are shown in Figures \ref{fig:msgs} and \ref{fig:corr}.
Our results highlight two strengths of our self-healing algorithms (\llog) when compared to algorithms without self-healing (\bfly).
First, the message cost per \send decreases as the total number of calls to \send increases, as illustrated in Figure \ref{fig:msgs}. Second, for a fixed number of calls to \send, the message cost per \send decreases as the total number of bad nodes decreases, as shown in Figure \ref{fig:msgs} as well. In particular, when there are no bad nodes, \llog has dramatically less message cost than \bfly.

Figure \ref{fig:msgs} shows that for $n = 14{,}116$, the number of messages per SEND for \bfly is $30{,}516$; and for \llog, it is $525$. 
Hence, the message cost is reduced by a factor of $58$.
Also for $n = 30{,}509$, the number of messages per SEND for \bfly is $39{,}170$; and for \llog, it is $562$; which implies that the message cost is reduced by a factor of $70$.

In Figure \ref{fig:corr}, \bfly has $0$ corruptions; however, for \llog, the fraction of messages corrupted per \send decreases as the total number of calls to \send increases. Also for a fixed number of calls to \send, the fraction of messages corrupted per \send decreases as the total number of bad nodes decreases. 

Furthermore, in Figure \ref{fig:corr}, for each network, given the size and the fraction of bad nodes, if we integrate the corresponding curve, we get the total number of times that a message can be corrupted in calls to \send in this network. 
These experiments show that the total number of message corruptions is at most $3t(\log \log n)^2$.

\begin{figure*}
\centerline{
\includegraphics[scale=0.35]{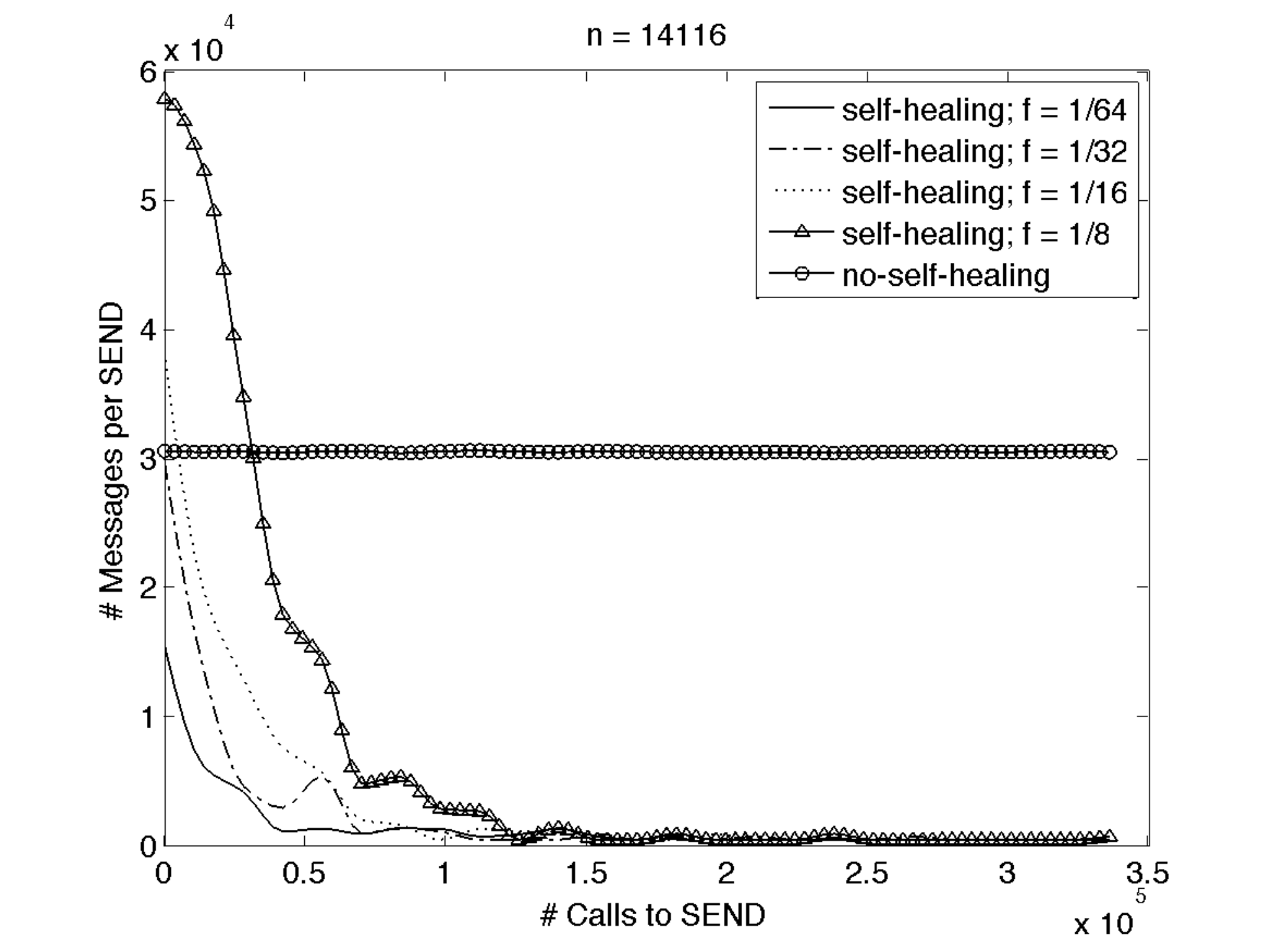}
\includegraphics[scale=0.35]{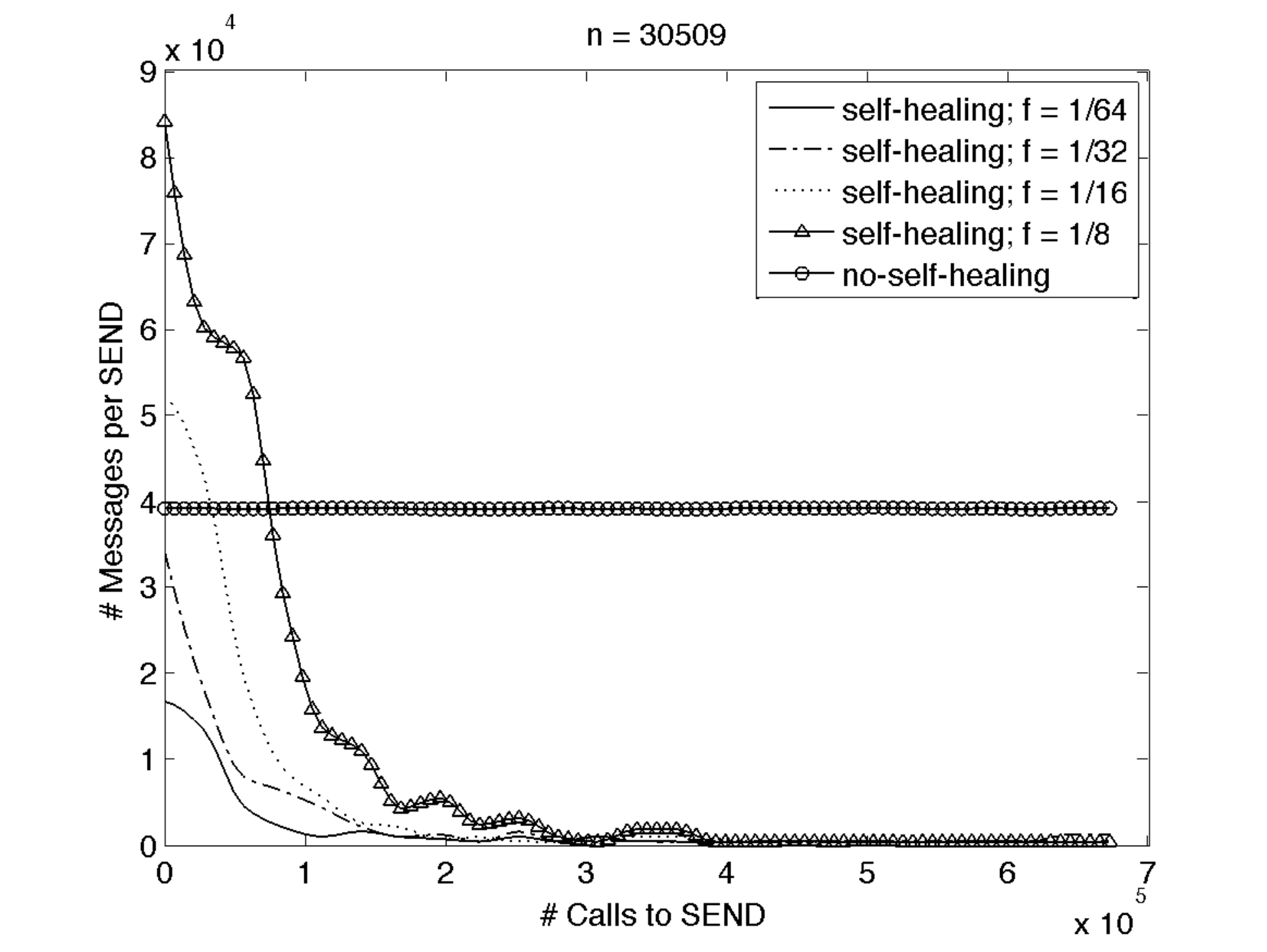}
}
\caption{\# Messages per \send versus \# calls to \send, for $n = 14{,}116$ and $n = 30{,}509$.}
\label{fig:msgs}
\end{figure*}

\begin{figure*}
\centerline{
\includegraphics[scale=0.35]{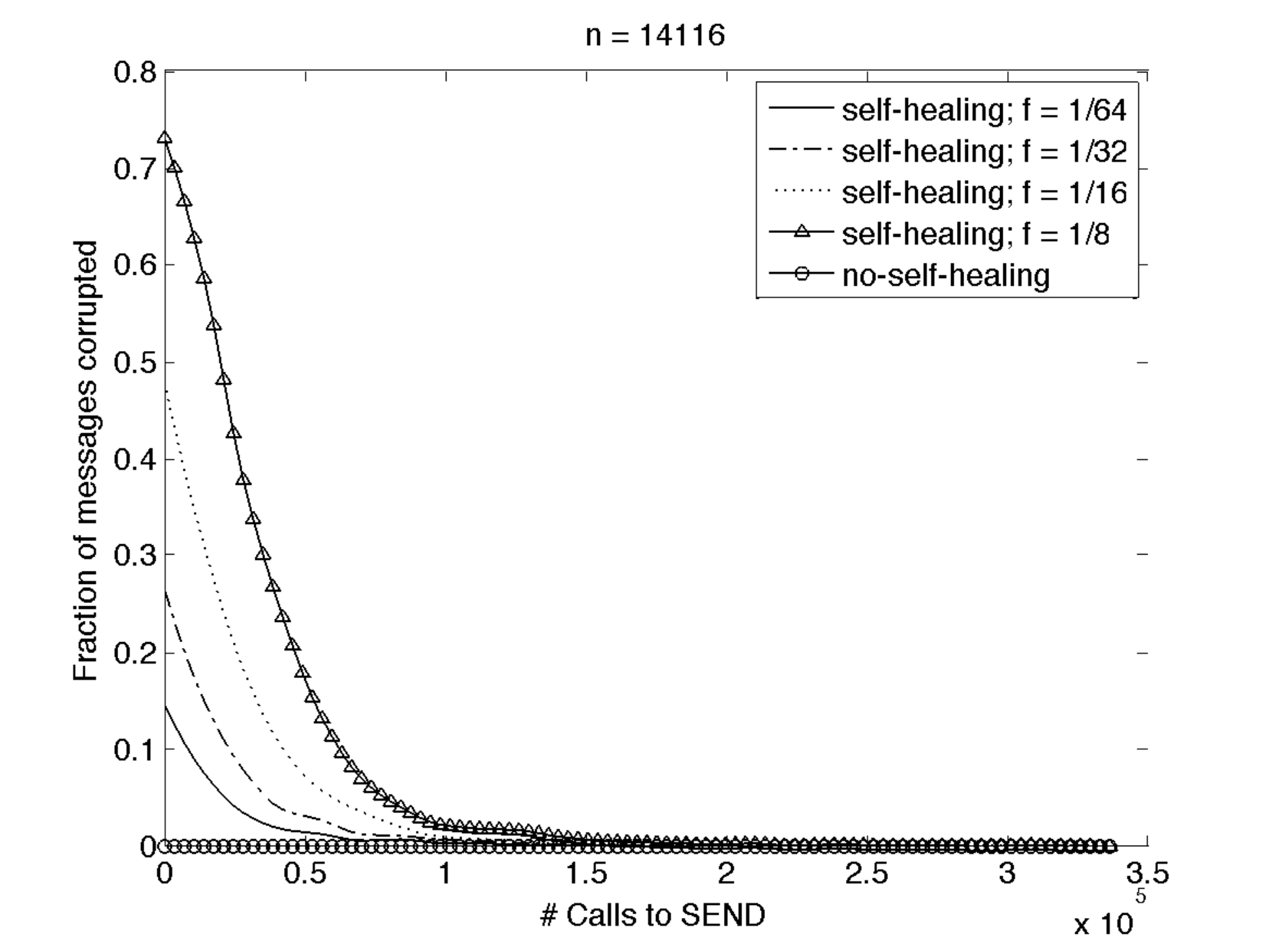}
\includegraphics[scale=0.35]{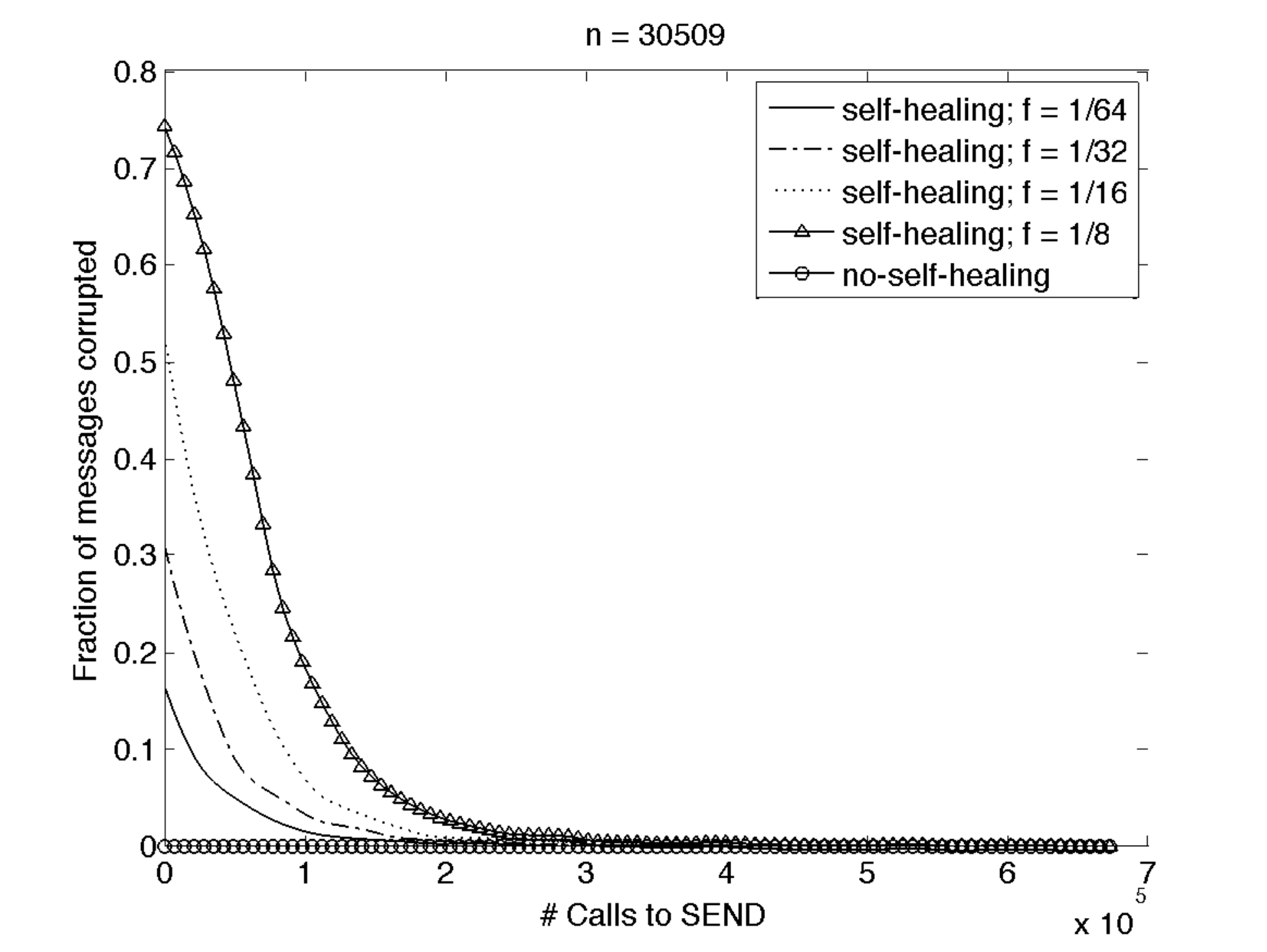}
}
\caption{Fraction of messages corrupted versus \# calls to \send, for $n = 14{,}116$ and $n = 30{,}509$.}
\label{fig:corr}
\end{figure*}


\section{Conclusion and Future Work}\label{s:conc}

We have presented algorithms that can significantly reduce communication cost in attack-resistant peer-to-peer networks.
The price we pay for this improvement is the possibility of message corruption.  In particular, if there are $t \leq \numbadnodes$ bad nodes in the network, our algorithm allows $O(t \chkprobinv)$ message transmissions to be corrupted in expectation.  

Many problems remain.  
First, it seems unlikely that the smallest number of corruptions allowable by an attack-resistant algorithm with optimal message complexity is $O(t \chkprobinv)$.  Can we improve this to $O(t)$ or else prove a non-trivial lower bound?  Second, can we apply techniques in this paper to problems more general that enabling secure communication?  For example, can we create self-healing algorithms for distributed \emph{computation} with Byzantine faults?  Finally, can we optimize constants and make use of heuristic techniques in order to significantly improve our algorithms' empirical performance?

\bibliographystyle{plain}
\bibliography{markalgorithm}

\clearpage

\appendix
\section{Appendix - Deferred Proofs}\label{app:proofoftheorem}

\noindent
{\bf Lemma \ref{l:check}.} 
Assume some node selected uniformly at random in the last call to \sendpath has corrupted a message.  Then when the algorithm \advancedcheck is called, with probability at least $1/2$, some node will call \update.

\begin{proof}
This proof makes use of the following two propositions, but first we define the critical intervals in \advancedcheck.

\begin{definition}
We say that there is a $(j, k)$ critical interval in round $i$ if
a reliable transmittal in round $i$ from a good node in $Q_j$ to a good node in $Q_k$ will result in a call to \update.
\end{definition}

\begin{fact}\label{fact:transmits_reliably}
If $(j, k)$ is a critical interval in round $i$ then $\exists j', k', j\leq j'\leq k' \leq k$ such that $(j', k')$ is a critical interval in round $i+1$.

%
\end{fact}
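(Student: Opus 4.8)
The plan is to derive this proposition from the two monotonicity observations already sketched in the discussion of \advancedcheck: the ``left bar'' never moves leftward and the ``right bar'' never moves rightward from one round to the next. The real work is in formalizing those two observations; combining them then gives the containment $j \le j' \le k' \le k$ immediately.

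Concretely, I would first read the hypothesis as follows: that $(j,k)$ is a critical interval in round $i$ means there is a good node $g$ participating at $Q_j$ that received the authentic message $m'$ in round $i$, and a good node $h$ participating at $Q_k$ that did not (it holds a corrupted message, or it failed to receive an expected one), with $j \le k$, so that a reliable delivery of $m'$ from $g$ to $h$ would expose a conflict — or supply a message that $h$ was missing — and make $h$ invoke \update. I then take $j' := j$. For the left endpoint I argue that $g$ still participates at $Q_j$ in round $i+1$, since the subquorums $S_1,\dots,S_\ell$ of \advancedcheck are incremental; and since $g$ has already seen the verification key $k_p$, the final clause of Algorithm~\ref{a:check} forces $g$ either to receive and successfully verify the authentic $m'$ again in round $i+1$ — so $Q_{j'}$ still hosts a good node holding $m'$ — or else to call \update, in which case the conclusion of Lemma~\ref{l:check} is already met and nothing remains to prove. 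For the right endpoint I take $k'$ to be the leftmost column hosting, in round $i+1$, a good participant that did not receive the authentic $m'$; the witness $h$ (still at $Q_k$ in round $i+1$, again by incrementality) shows $k' \le k$, unless $h$ instead detects an inconsistency between what it saw in round $i$ and round $i+1$ and calls \update, in which case we are again done. Finally $j' \le k'$ holds because every column strictly to the left of $k'$ has all its good participants holding $m'$ in round $i+1$, and in particular column $j = j'$ does, so $(j',k')$ is a critical interval in round $i+1$.

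The main obstacle I anticipate is not the combinatorics but the case analysis hidden inside ``received the authentic $m'$'': the argument must handle uniformly the good node that received a corrupted message (and so carries a possibly forged $k_p$ and will be caught by a later conflict), the good node that received nothing at all (which, being a selected participant, already fails to receive an expected message and triggers \update in the real run), and the freshly selected forwarding nodes $x_j^{(i+1)}$, each of which may be good or bad. I also have to be precise that the forwarding structure — all of $S_j$ send $m'$ to $x_{j+1}$, which relays to every node of the enlarged $S_{j+1}$ — really does prevent a good node that lacked $m'$ in round $i$ from quietly acquiring a consistent copy in round $i+1$ without raising an inconsistency. Once these boundary cases are pinned down, the left-endpoint and right-endpoint claims are each short, and the proposition follows by containment.
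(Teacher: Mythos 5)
Your proposal is correct and follows essentially the same route as the paper: the left endpoint cannot move left because a good node that already holds $k_p$ must keep receiving and verifying messages signed with $k_s$ or it calls \update, and the right endpoint cannot move right because the forwarding structure ($S_j \to x_{j+1} \to S_{j+1}$) and the inconsistency/missing-message rule prevent a good node lacking $m'$ from silently acquiring a consistent copy. You actually spell out the right-endpoint and case-analysis details more explicitly than the paper's proof, which only states the $k_p$ argument and leaves the rest to its informal ``left bar/right bar'' discussion.
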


To prove this proposition, note that the good nodes in $Q_{j}$ that are chosen by \advancedcheck in rounds $i$ or less know $\sender$'s public key, $k_{p}$.  Thus they must receive uncorrupted messages signed by $\sender$'s private key, $k_{s}$, in all rounds subsequent to $i$; otherwise, \update will be called.

\begin{fact}\label{fact:allmustbebad}
Fix a round $i+1$ in \advancedcheck, and let $(j', k')$ be a critical interval in round $i+1$ that minimizes the value $k'-j'$.  Then in round $i+1$, \update is called unless all nodes that are chosen between quorums $Q_{j'}$ and $Q_{k'}$ are bad.
\end{fact}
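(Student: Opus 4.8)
The plan is to argue by contradiction, playing the minimality of $k'-j'$ against the structure of a single round of \advancedcheck. Assume that round $i+1$ produces no call to \update, and suppose some chosen node strictly between $Q_{j'}$ and $Q_{k'}$ is good; write $x_p$ for the node added to $S_p$ in round $i+1$, with $j' < p < k'$. (If ``between'' is read inclusively, the endpoints $x_{j'}$ and $x_{k'}$ are handled the same way: a good $x_{j'}$ is itself a left witness, and a good $x_{k'}$ a right witness, for the re-split below.) I would first make explicit what it means for $(j',k')$ to be a \emph{minimal} critical interval in round $i+1$: on the left, there is a good node $a \in S_{j''}$ for some $j'' \le j'$ that, entering round $i+1$, holds the correctly $k_s$-signed $m'$ (and hence $k_p$) --- such a node exists by Proposition~\ref{fact:transmits_reliably} together with the fact that the ``left bar'' never moves leftward; on the right, there is a good node $b \in S_{k}$ for some $k \ge k'$ that received correct $m'$ in an earlier round and is therefore forced to call \update the moment a message arrives at it that fails verification under $k_p$ or is inconsistent with correct $m'$. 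Minimality also rules out any \emph{old} good node strictly inside $(j',k')$: such a node would either already have called \update --- contradicting our assumption --- or would hold correct $m'$, yielding a strictly narrower critical interval.

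I would then case-split on what the good node $x_p$ does with the message it receives from $S_{p-1}$ in round $i+1$. If $x_p$ obtains the correctly signed $m'$, then $Q_p$ hosts a good node holding correct $m'$ in round $i+1$, so $(p,k')$ is a critical interval in round $i+1$: the right witness $b$ is untouched, and a reliable relay from $x_p$ onward still drives correct $m'$ into $b$ and forces \update. Since $p > j'$, this interval has width strictly less than $k'-j'$, contradicting minimality. If $x_p$ does not obtain a correctly signed $m'$, then either it detects this locally --- receiving inconsistent messages from $S_{p-1}$, or knowing $k_p$ and seeing the message fail verification, or expecting a message and receiving none --- in which case $x_p$ itself calls \update, again a contradiction; or $x_p$ is a ``fresh'' good node, with no prior knowledge of $k_p$, that forwards a corrupted $m'' \neq m'$ without complaint. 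In this last subcase I would argue that $(j',p)$ is critical: in any reliable relay carrying correct $m'$ from $a$ down to $x_p$, the node $x_p$ forwards correct $m'$ rightward, and since $(j',k')$ was critical and nothing downstream has yet complained, that correct $m'$ eventually reaches the right witness $b$ and triggers \update. As $p < k'$, this contradicts minimality once more.

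The main obstacle is this last subcase. Turning ``$(j',p)$ is critical'' into a rigorous argument requires pinning down the semantics of a reliable transmittal and using three ingredients: the \broadcast-based consistency that, absent any call to \update, forces all nodes of a given $S_j$ to agree on one common message in each round (so a fresh good $x_p$ cannot be both ``silent'' and helpful to the adversary); Proposition~\ref{fact:transmits_reliably}, which keeps the right witness $b$ valid as rounds advance; and the observation that the adversary never gains by ``un-corrupting'' a message, so within round $i+1$ the corruption location behaves like a single cut that the re-split must respect. A secondary point is the extremal setup of the first paragraph: one should check that it is minimality of $k'-j'$ --- rather than, say, of $k'$ alone --- that licenses taking the right witness at $Q_{k'}$ and the left witness at $Q_{j'}$ with no old good node strictly between, since both re-splits above depend on that normal form.
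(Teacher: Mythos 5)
Your skeleton is the same as the paper's: assume the claim fails, take a good node $x_p$ chosen in round $i+1$ at a position $p$ strictly inside the minimal critical interval, case-split on whether it obtains the true $m'$, and in each case exhibit a critical interval of width strictly less than $k'-j'$, contradicting minimality. Your first case (correct $m'$ received, giving the critical interval $(p,k')$) and your ``detects locally'' branch are fine; indeed your assignment of subintervals is the one consistent with the left-bar/right-bar semantics (the paper's text attaches $(j',x')$ to the ``receives $m'$'' case and $(x',k')$ to the other, apparently a transposition that does not affect the contradiction). Most of the witness machinery of your first paragraph is not needed; the paper's proof uses only the two-case split.

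The genuine gap is exactly where you flag ``the main obstacle'': your justification that $(j',p)$ is critical in the fresh-good-node subcase. You argue that a reliable relay of the correct $m'$ as far as $x_p$ would then be forwarded by $x_p$ and ``eventually reach'' the right witness $b$, forcing \update. That does not follow from the definition of a critical interval: criticality of $(j',p)$ only concerns a reliable transmittal up to $Q_p$, and the nodes chosen between $Q_p$ and $Q_{k'}$ may all be bad, so they can absorb or re-corrupt whatever $x_p$ forwards and no call to \update is forced along that route. The repair is shorter and is what the paper's corresponding case amounts to: in this subcase $x_p$ is a good member of $S_p$ that has accepted and will retain a message $m''\neq m'$, so a reliable transmittal of the true $m'$ from a good node in $Q_{j'}$ to $x_p$ gives $x_p$ two inconsistent messages and it calls \update; hence $(j',p)$ is critical directly, with $p-j' < k'-j'$, contradicting minimality. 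With that one substitution (and dropping the continuation-to-$b$ argument), your proof closes and coincides with the paper's.
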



To show Proposition \ref{fact:allmustbebad}, assume by way of contradiction that it is false. Then there exists $x'$ such that $j' < x' < k'$, where the node $p'$ chosen in round $i+1$ at $Q_{x'}$ is good. There are two cases for what happens in round $i+1$:
\begin{itemize}
\item Case 1: The node $p'$ receives the message $m'$ sent by node \sender. But then $(j', x')$ is a critical interval in round $i+1$ and $x' - j' < k' - j'$. 
This contradicts the assumption that the indices $j'$ and $k'$ had the minimal distance among all critical intervals in round $i+1$.
\item Case 2: The node $p'$ does not receive the message $m'$ sent by node \sender.  But then $(x', k')$ is a critical interval in round $i+1$ and $k' - x' < k' - j'$.  This again contradicts the assumption that the indices $j'$ and $k'$ had the minimal distance among all critical intervals in round $i+1$.
\end{itemize}

Now we can use these two propositions to prove the lemma. Let $X_{i}$ be an indicator random variable that it is equal to $1$ if $(k'-j') \leq \log(k-j)$ and $0$ otherwise, where $(j,k)$ is a critical interval in round $i$ and $(j',k')$ is a critical interval in round $i+1$.
Recall that at least $1/2$-fraction of the nodes in any quorum are unmarked, then the probability that an unmarked bad node is selected uniformly at random is at most $1/4$.
By Lemma~\ref{l:intervalbad}, the probability of having any substring of length $\max(1, \log{x})$ bad nodes in a sequence of $x$ nodes selected independently is at most $1/2$.
Thus, each $X_{i}$ is $1$ with probability at least $1/2$. 
We require at least $\log^* n$ of the $X_{i}$ random variables to be $1$ in order for some node to call \update.\footnote{Here we assume $\ell \leq n$.  
However, we can achieve the same asymptotic results assuming that $\ell$ is bounded by a polynomial in $n$.} 
Let $X = \sum_{i=1}^{\chkrows} X_i$.  Then $\E{(X)} = 2 \log^* n$, and since the $X_{i}$'s are independent, by Chernoff bounds,
$$
\Pr\left(X < (1-\delta)2\log^*n\right) \leq \left(\frac{e^\delta}{(1+\delta)^{1+\delta}}\right)^{2\log^*n}.
$$

For $\delta = \frac{1}{2}$ and $n > 16$, 
$
\Pr\left(X < \log^*n\right) \leq \left(\frac{e^\frac{1}{2}}{(\frac{3}{2})^{\frac{3}{2}}}\right)^{2\log^*n} < \frac{1}{2}.
$

Thus the probability that \advancedcheck succeeds in finding a corruption and calling \update is at least $1/2$. \qed

\end{proof}

\noindent
{\bf Proof of Theorem \ref{thm:corruptions}.}
We first show the message complexity and the latency of our algorithms.
By Lemma \ref{l:numc}, the number of calling \update is at most $3t/2$. Thus the resource cost of all calls to \update are bounded as the number of calls to \send grows large.
Therefore, for the amortized cost, we consider only the cost of the calls to \sendpath and \advancedcheck.

When sending a message through $\ell$ quorums, \sendpath has message cost $O(\ell + \log n)$ and latency $O(\ell)$.
\advancedcheck has a message cost of $O((\ell + \log n) (\log^* n)^2)$ and a latency of $O(\ell \log^* n)$, but \advancedcheck is called only with probability $1/\chkprobinv$.  
Hence, the call to \send has amortized expected message cost $O(\ell + \log n)$ and amortized expected latency $O(\ell)$.

More specifically, if we perform any number of message sends through quorum paths, where $\ell_{M}$ is the longest such path, and $\mathcal{L}$ is the sum of the quorums traversed in all such paths, then the expected total number of messages sent will be $O(\mathcal{L} + t \cdot ( \ell_{M} \log^{2} n + \log^5 n))$, and the latency is $O(t \cdot \ell_M)$. 
This is true since each call to \update has message cost $O(\ell_M \log^2 n + \log^5 n)$ and latency $O(\ell_M)$, where 
1) the node, $x$, making the call to \update broadcasts its reason of calling \update to all nodes in its quorum, this has message cost $O(\log n)$ and latency $O(1)$;
2) all nodes in every quorum in the quorum path are notified via all-to-all communication when \update is called, these notifications have a message cost of $O(\ell_M \log^2 n)$ and a latency of $O(\ell_M)$; 
3) \update has $O(\log^{*}n)$ broadcasts over at most $\ell_M$ quorums, that has message cost $O(\ell_M \log^{*}n \cdot \log n)$ and latency $O(1)$; 
4) the message cost when all nodes in $Q_1$ and $Q_\ell$ broadcast is $O(\log^2 n)$ with latency $O(1)$;
5) for marking a pair of nodes that are in conflict, the message cost is $O(\log^3 n)$ and the latency is $O(1)$; and 
6) we know that marking a pair of nodes that are in conflict could cause $O(\log n)$ quorums to be unmarked; and unmarking half of nodes in a quorum has a message cost of $O(\log^4 n)$. 
Then unmarking $O(\log n)$ quorums has a message cost of $O(\log^5 n)$ and a latency of $O(1)$. 

Recall that by Lemma~\ref{l:numc}, the number of times \advancedcheck must catch corruptions before all bad nodes are marked is at most $3t/2$.  
In addition, if a bad node caused a corruption during a call to \sendpath, then by Lemmas~\ref{l:check} and~\ref{l:update}, with probability at least $1/2$, \advancedcheck will catch it.  
As a consequence, it will call \update, which marks the nodes that are in conflict.  $\update$ is thus called with probability $1/\chkprobinv$, so the expected total number of corruptions is at most $3t \chkprobinv$.

\end{document}